\crefname{algocf}{Algorithm}{Algorithms}
\Crefname{algocf}{Algorithm}{Algorithms}
\crefname{figure}{Figure}{Figures} 
\crefname{assumption}{Assumption}{Assumptions}
\crefname{subsection}{Subsection}{Subsections}
\newcounter{cdrow}
\newtheorem{theorem}{Theorem}[]
\newtheorem*{theorem*}{Theorem}
\newtheorem{prop}[theorem]{Proposition}
\newtheorem*{claim*}{Claim}
\theoremstyle{definition}
\newtheorem{definition}[theorem]{Definition}
\newtheorem*{definition*}{Definition}
\theoremstyle{remark}
\newtheorem{example}[theorem]{Example}
\newtheorem*{example*}{Example}
\newcommand*{\op}{%
  \DOTSB
  \mathop{\vphantom{\bigoplus}\mathpalette\matt@op\relax}%
  \slimits@
}
\newcommand\matt@op[2]{%
  \vcenter{\m@th\hbox{\resizebox{\widthof{$#1\bigoplus$}}{!}{$\boxplus$}}}%
}
\newcommand{\IR}{\mathbb R}
\newcommand{\IN}{\mathbb N}
\newcommand{\Ff}{\mathcal F}
\definecolor{forest}{RGB}{30, 130, 30}
\definecolor{skyblue}{RGB}{0, 180, 250}
\def\@biblabel#1{}
\@citea\NAT@hyper@{%
     \NAT@nmfmt{\NAT@nm}%
     \hyper@natlinkbreak{\NAT@aysep\NAT@spacechar}{\@citeb\@extra@b@citeb}%
     \NAT@date}}
\@citea\NAT@nmfmt{\NAT@nm}%
\NAT@spacechar\NAT@hyper@{\NAT@date}}{}{}
\@citea\NAT@hyper@{%
     \NAT@nmfmt{\NAT@nm}%
     \hyper@natlinkbreak{\NAT@spacechar\NAT@@open\if*#1*\else#1\NAT@spacechar\fi}%
       {\@citeb\@extra@b@citeb}%
     \NAT@date}}
\@citea\NAT@nmfmt{\NAT@nm}%
\fi\NAT@hyper@{\NAT@date}}
\begin{document}
\def\spacingset#1{\renewcommand{\baselinestretch}%
{#1}\small\normalsize} \spacingset{1}

\begin{flushleft}
{\Large{\textbf{Topological Community Detection: A Sheaf-Theoretic Approach}}}
\newline
\\
Arne Wolf$^{1,2,\dagger}$ and Anthea Monod$^{1,\dagger}$
\\
\bigskip
\bf{1} Department of Mathematics, Imperial College London, UK
\\
\bf{2} London School of Geometry and Number Theory, UK
\\
\bigskip
$\dagger$ Corresponding e-mails: a.wolf22@imperial.ac.uk, a.monod@imperial.ac.uk
\end{flushleft}


\section*{Abstract}
We propose a model for network community detection using topological data analysis,
a branch of modern data science that leverages theory from algebraic topology to statistical analysis and machine learning. Specifically, we use cellular sheaves, which relate local to global properties of various algebraic topological constructions, to propose three new algorithms for vertex clustering over networks to detect communities. We apply our algorithms to real social network data in numerical experiments and obtain near optimal results in terms of modularity. Our work is the first implementation of sheaves on real social network data and provides a solid proof-of-concept for future work using sheaves as tools to study complex systems captured by networks and simplicial complexes.

\paragraph{Keywords:} Cellular sheaves; community detection; modularity; opinion dynamics; topological data ana-lysis.
\\


\section{Introduction}
\label{sec:intro}

Networks are used to describe, study, and understand complex systems in many scientific disciplines.  One of the most important features in complex systems that networks are able to capture is the presence of \emph{communities}. In networks, communities can be seen as partitioning a graph into clusters, which are subsets of vertices with many edges connecting the vertices within the subset, and comparatively fewer edges connecting to different subsets in the rest of the network. These clusters or communities can be considered as relatively independent components of a graph.
The problem of \emph{community detection} is to locate those clusters in networks which are more strongly connected than the whole network is, on average.
Community detection is a challenging and active area of research in network science: a key aspect that makes the problem difficult is that there is no single, universally accepted definition of a community within a network and it is largely dependent on context or the specific system being studied.

Topological data analysis (TDA) is a recently-emerged approach to data science that uses principles from pure mathematics to extract meaningful information from large and complex datasets that may not possess a rigorous metric or vector space structure which is often required in classical data analysis. In our work, we focus on \emph{sheaves}, which are a tool that relates local to global properties of various constructions in algebraic topology and have been used to reinterpret and generalize many significant results in classical geometry and algebraic topology. Sheaves allow for information to be assigned to subsets of topological spaces; when the topological space is a network and the information is vector space-valued, a computational framework for sheaves becomes available similar to that of \emph{persistent homology}, which is a well-developed and the most widely-used tool in TDA; see \cite{curry} for complete details.

In this paper, we propose a topological approach to community detection based on sheaves.  Specifically, we show that sheaves may be used to rigorously model the problem of community detection on a network and propose three novel, sheaf-based community detection algorithms. We test and compare their performance with numerical experiments on a real-world benchmarking dataset and show that we are able to attain near optimal community detection results.

\paragraph{Related Work.}  Particularly relevant to our work, sheaves have been previously used to model various dynamics of opinions over social networks, including bounded confidence models, stubbornness, and the formation of lies \cite{hansenghrist}. In a similar spirit, propagation of gossip has also been theoretically modeled using sheaves \cite{ghristriess}.  However, it is important to note that no implementations nor applications to real data exist of these sheaf social network models. In a non-topological setting, the formation of opinion clusters has been investigated in the bounded confidence model \cite{hk}; another contrast to our work is that they use dynamics with discrete time steps and sharp confidence bounds.
Very recently, the only other TDA approach to community detection that we are aware of was proposed \cite{schindler2023persistent}, however, it is based on persistent homology, rather than sheaves, as in our work.

\section{Background: Sheaves and Social Networks}
\label{sec:background}

In this section, we define sheaves and present sheaf-theoretic notions in the context of social networks that will be used in our proposed community detection algorithms.

\subsection{Sheaves and Sheaf Cohomology}

Sheaves assign data to open subsets of topological spaces $X$ in a consistent manner. 
This compatibility is what enables them to relate \emph{local} (i.e., the pure data) and \emph{global} (i.e., having an assignment of compatible data) properties of $X$.
\emph{Cellular} sheaves are a special case where $X$ is a cellular complex (which is a generalization of a simplicial complex) and the data are vector spaces. In this paper, we will further restrict to the case that the cellular complex is a graph $G=(V,E)$; we consider undirected graphs without loops or multiple edges.  We write edges as ordered tuples $e=(v_1,v_2)$, ordered arbitrarily.  Under these restrictions to graphs, we now outline how important concepts from graph theory generalize to the topological setting of sheaves which will be relevant to our work further on.

The graph $G$ becomes a topological space when exploiting the fact that the \emph{subface relation} allows $G$ to be viewed as a partially ordered set and to be equipped with the Alexandrov topology. 
A sheaf on $G$ turns out to be uniquely determined by the following data \cite{curry}.
\begin{definition}
    A \emph{(cellular) sheaf} $\Ff$ assigns finite-dimensional real vector spaces, called \emph{stalks}, $\Ff(v)$ to each vertex and $\Ff(e)$ to each edge of a graph, and a linear map $\Ff_{v\subset e} :\Ff(v) \rightarrow \Ff(e)$ (called a \emph{restriction}) to each incidence $v\subset e$.
\end{definition}

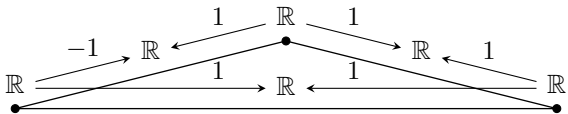
\begin{figure}[h]
    \begin{center} 
        \begin{tikzpicture}[x=0.9cm,y=0.9cm]
	\clip(0.5,0.5) rectangle (9.5,3);
    \draw [line width=.5pt] (1,1)-- (5,2);
    \draw [line width=.5pt] (5,2)-- (9,1);
    \draw [line width=.5pt] (1,1)-- (9,1);
    \draw [-stealth] (1.3,1.4) -- (2.7,1.75);
    \draw [stealth-] (3.3,1.9) -- (4.7,2.25);
    \draw [-stealth] (5.3,2.25) -- (6.7,1.9);
    \draw [stealth-] (7.3,1.75) -- (8.7,1.4);
    \draw [-stealth] (1.3,1.3) -- (4.7,1.3);
    \draw [stealth-] (5.3,1.3) -- (8.7,1.3);
    \draw[color=black, anchor=south] (2,1.6) node {$-1$};
    \draw[color=black, anchor=south] (4,2.1) node {$1$};
    \draw[color=black, anchor=south] (6,2.1) node {$1$};
    \draw[color=black, anchor=south] (8,1.6) node {$1$};
    \draw[color=black, anchor=south] (4,1.3) node {$1$};
    \draw[color=black, anchor=south] (6,1.3) node {$1$};
	\draw [fill=black] (1,1) circle (1.5pt);
    \draw[color=black, anchor=south] (1,1.1) node {$\IR$};
    \draw[color=black, anchor=south] (5,2.1) node {$\IR$};
    \draw[color=black, anchor=south] (9,1.1) node {$\IR$};
    \draw[color=black, anchor=south] (3,1.6) node {$\IR$};
    \draw[color=black, anchor=south] (7,1.6) node {$\IR$};
    \draw[color=black, anchor=south] (5,1.07) node {$\IR$};
 	\draw [fill=black] (9,1) circle (1.5pt);
	\draw [fill=black] (5,2) circle (1.5pt);
    \begin{scriptsize}
    \end{scriptsize}
\end{tikzpicture}\\ 
    \caption{Example of a cellular sheaf}
    \label{fig:noglobal} 
    \end{center}
\end{figure}

\Cref{fig:noglobal} illustrates a cellular sheaf. Other examples that can be defined for any graph $G$ are the \textit{constant sheaves} $\underline \IR^n$ for $n \in \IN$, where all stalks are $\IR^n$ and all restrictions the identity map.

Algebraic topology is a field of pure mathematics that uses abstract algebra to study topological spaces; specifically, it defines algebraic ways of counting properties of topological spaces that are left unaltered under continuous deformations of the topological space, such as stretching or compressing. Such properties are referred to as \emph{invariants}; 
\emph{cohomology groups} are examples of such invariants.
Sheaves over topological spaces give rise to \emph{sheaf cohomology groups}. In our setting, these are obtained from a collection of vector spaces and maps between them, called \emph{the cochain complex} \cite[Theorem 1.4.2]{she} 

$$
0 \longrightarrow \ C^0(G,\Ff) \xrightarrow[\hspace{.7cm}]{\delta} C^{1}(G,\Ff) \longrightarrow 0 \longrightarrow \cdots
$$
with cochain groups
$$
C^0(G,\Ff):= \bigoplus_{v\in V} \Ff(v), \qquad C^1(G,\Ff):=\bigoplus_{e\in E} \Ff(e).
$$
Here, the linear \emph{coboundary map} $\delta: C^0(G,\Ff) \rightarrow C^1(G,\Ff)$ is defined by acting linearly on stalks. On $\Ff(v)$, it acts according to
$$
    x_v \mapsto \sum_{e=(v_k,v)} \Ff_{v\subset e} \, x_v - \sum_{e=(v,v_l)} \Ff_{v\subset e} \, x_v.
$$
The sheaf cohomology groups of interest are then
$$
    H^0(G,\Ff)=\ker(\delta) \qquad H^1(G,\Ff)=C^1(G,\Ff) \big/ \text{im}(\delta) = \text{coker}(\delta).
$$
\begin{example}
    The graph in \Cref{fig:noglobal} has trivial cohomology with respect to the illustrated sheaf, even though the classical (e.g., simplicial) cohomology is nontrivial. This shows that these two cohomology theories need not agree.
\end{example}
Sheaf cohomology is in fact a generalization of classical cohomology, because for the constant sheaf $\underline \IR^1$, the sheaf cohomology yields precisely the cellular cohomology of $G$. In this case, $\delta=B^\top$ is the transpose of the \textit{signed incidence matrix} $B$ which is defined by
\begin{align*}
    B_{v,e}=\begin{cases} 1 & \exists \, w \in V:\, e=(v,w),\\
    -1 & \exists \, w \in V:\, e=(w,v),\\
    0 & \text{otherwise.}
    \end{cases}
\end{align*}
Recall that the classical \textit{graph Laplacian} can be obtained as $L=BB^\top$.  This notion may be generalized to obtain the following definition.
\begin{definition}
    For a cellular sheaf $\Ff$ over $G$, the \emph{sheaf Laplacian} is $L_\Ff:=\delta^\top \delta$.
\end{definition}
It can be shown that neither sheaf cohomology nor the sheaf Laplacian depend on the initially chosen orientation.

\subsection{Discourse Sheaves and Opinion Dynamics}
\label{sec:ds_od}

By considering social networks $G$ where persons are modeled by vertices and  acquaintanceship is modeled by connections, the distributions of opinions may be expressed by a \emph{discourse sheaf} $\Ff$ \citep{hansenghrist}; see \Cref{fig:discshf} for an illustration.

\begin{figure}[h]
\centering
    \begin{tikzpicture}[x=1cm,y=1cm]
	\clip(0,0.5) rectangle (8,2.5);
    \draw [line width=.5pt] (1,1)-- (7,1);
    \draw [line width=.5pt] (1,1)-- (0,1.2);
    \draw [line width=.5pt] (1,1)-- (0,0.5);
    \draw [line width=.5pt] (8,0.6)-- (7,1);
    \draw [line width=.5pt] (8,1.2)-- (7,1);
    \draw [-stealth,thick] (1.9,1.8) -- (3,1.8);
    \draw [stealth-,thick] (5,1.8) -- (6.1,1.8);
	\draw [fill=black] (1,1) circle (1.5pt);
    \node[draw, text width=1.2cm,align=center] at (1,1.8) {\small opinion space};
    \node[draw, text width=1.2cm,align=center] at (7,1.8) {\small opinion space};
    \node[draw, text width=1.5cm,align=center] at (4,1.8) {\small discourse space};
	\draw [fill=black] (7,1) circle (1.5pt);
\end{tikzpicture}
\caption{Sketch of a discourse sheaf}
\label{fig:discshf}
\end{figure}
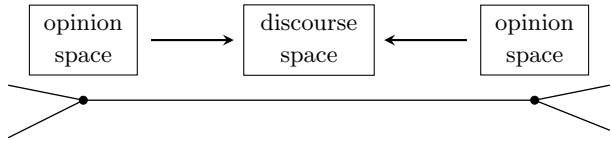

Each person $v$ is assigned an \emph{opinion space} $\IR^{n_v}=\Ff(v)$ and \emph{opinion} $x_v \in \Ff(v)$. A basis of $\Ff(v)$ can be seen as collection of basic topics that $v$ cares about and the component of $x_v$ in a basis direction expresses the opinion about that topic (e.g., how much $v$ supports a certain politician).

Edges stand for discourse about topics in $\IR^{n_e}=\Ff(e)$. There need not be a relation between bases of different stalks. However, each person $v$ projects their opinion of the discussed topics on $e$ via $\Ff_{v\subset e}$. There is \emph{consensus} along $e=(u,v)$ if $\Ff_{v\subset e} \, x_v = \Ff_{u \subset e } \, x_u$.

In this framework, several models have been proposed to describe how opinions expressed by such a sheaf evolve over time \cite{hansenghrist}. 
The basic model assumes that everyone changes their opinion in order to minimize the difference to the average opinion of their friends, i.e.,
$$
    \frac{d}{d t} x_v(t) = \sum_{v \mathrel{\overset{\makebox[0pt]{\mbox{\normalfont\small\sffamily $e$}}}{\sim}} u} \Ff_{v\subset e}^\top (\Ff_{u\subset e} \, x_u - \Ff_{v \subset e} \, x_v) 
$$
Here $v \mathrel{\overset{\makebox[0pt]{\mbox{\normalfont\small\sffamily $e$}}}{\sim}} u$ denotes that $v$ and $u$ share a common edge $e$. Combining all opinions to one vector $x$, this can be written as $\displaystyle \frac{d}{d t} x(t) = - L_\Ff \, x$. Solutions converge exponentially to consensus on all edges (\cite{hansenghrist}, Theorem 4.1).

The more realistic \emph{bounded confidence model} assumes that the influence of a friend's opinion on $v$ decreases if their opinions differ too much. The decay is expressed by a monotonically decreasing ``bump'' function $\phi:[0,\infty) \rightarrow [0,1]$ that vanishes precisely whenever some threshold $D$ is surpassed, as shown in \Cref{fig:phis}.
The modified dynamics are described by
\begin{equation}
    \frac{d}{d t} x_v(t) = \sum_{v \mathrel{\overset{\makebox[0pt]{\mbox{\normalfont\small\sffamily $e$}}}{\sim}} u} \phi \left(||\Ff_{u\subset e} \, x_u - \Ff_{v \subset e} \, x_v|| \right) \Ff_{v\subset e}^\top (\Ff_{u\subset e} \, x_u - \Ff_{v \subset e} \, x_v). \label{eq:bdcom}
\end{equation}
Configurations are stable if on each edge the difference is zero or at least $D$. In such a stable configuration, disregarding all edges without consensus and considering the connected components of the remaining graph provides a partition of the vertex set $V$ (persons in the social network). 

\section{Methods and Experimental Design}
\label{sec:alg}

In this section, we outline our proposed sheaf-theoretic algorithms used for community detection. 
We propose two algorithms that generate partitions of a given graph at random and one deterministic algorithm and evaluate their ability to detect communities. For the sake of comparability, we consider partitions of the full vertex set of a fixed graph (see \Cref{fig:karate} for an example of such a partition) and evaluate how well they represent the community structure of the graph. The common evaluation measure that we will use in this work is \emph{modularity}, which is computed in an assessment step to determine whether the division into communities is ``relatively good.''

\begin{figure}[h]
\centering
    \includegraphics[scale=0.4]{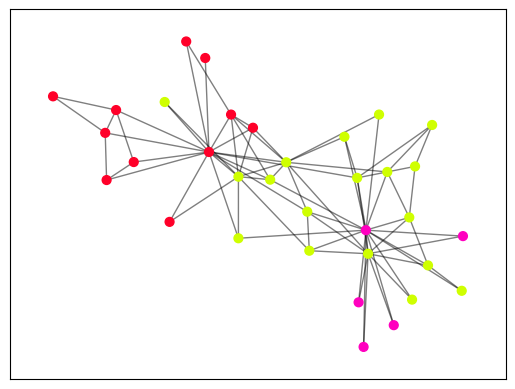}
    \caption{``Karate club'' graph partitioned into three communities.} 
    \label{fig:karate}
\end{figure}

\begin{definition}
\label{def:mod}
Let $G=(V,E)$ be a graph and $V$ be subdivided into $N$ subsets $\{V_c \mid c=1, \dots, N\}$, with $E_c$ being the set of edges between nodes from $V_c$. Then the modularity $Q$ is given by
\begin{equation}
\label{eq:modularity}
    Q:=\sum_{c=1}^N \left[\frac{|E_c|}{|E|} - \left(\frac{\sum_{v \in V_c} \deg (v)}{2|E|}\right)^2\right] \in [-1,1].
\end{equation}
\end{definition}
Intuitively, the quantity $Q$ captures how many more edges than the average are within the subsets $V_c$
, so high modularity indicates a good partitioning.

\subsection{Detecting Communities with Constant Sheaves}
Our first algorithm considers constant sheaves and models dynamics after the sheaf-theoretic bounded confidence model previously described in \Cref{sec:ds_od}.

\begin{algorithm}
\caption{Community Detection with Constant Sheaves $\underline \IR^n$}\label{alg:const}
\KwInput{dimension parameter $n$, parameter for diameter $d$}
Initialize by picking a random opinion vector for each vertex stalk uniformly from $B(0,\frac d 2)$\\
Evolve the system using \cref{eq:bdcom} until for no edge $e=(u,v)$, the difference $\|x_u - x_v\| \in (0.0033,1)$; abort if that does not happen within 1000 time units\\
Obtain primary partition by grouping neighbors $w,w'$ together if and only if $\|x_w-x_{w'}\| \le 0.0033$\\
\If{we find a single-vertex community $v$}
{
add the vertex to that adjacent community $C$, which has maximal $2|E| \, k_C - \deg(v) \cdot \sum_{w \in C} \deg(w)$, where $k_C$ is the number of neighbors of $v$ belonging to $C$
}
\KwOutput{The obtained partition}
\end{algorithm}

In Step 2 of \cref{alg:const}, we have included a stopping criterion for time efficiency: If the evolution does not converge after 1000 time units, the calculation is aborted. In \cref{sec:convergence} we explain why this is necessary. We keep track of the number of abortions and take them into account in our reported results below, when providing uncertainties.

We impose the positive value of 0.0033 for the computed difference in Steps 2 and 3 as another stopping criterion and trade-off between precision and time-efficiency.

Step 4 resolves single-vertex clusters by adding the vertex $v$ to the cluster of that neighbor of $v$ which is optimal in the sense of modularity. \Cref{prop:mod} justifies this choice by showing that each of these moves increases modularity.

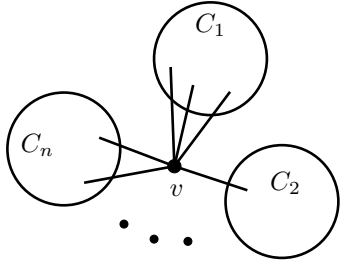
\begin{figure}[h]
\centering
\begin{tikzpicture}[x=0.5cm,y=0.5cm]
\clip(-3.1,-3) rectangle (6,5);
\draw [line width=1pt] (2.5,3.3) circle (.75cm);
\draw [line width=1pt] (4.4,-0.5) circle (.75cm);
\draw [line width=1pt] (-1.4,0.9) circle (.75cm);
\draw [line width=1pt] (3.48,-0.2)-- (1.54,0.44);
\draw [line width=1pt] (1.54,0.44)-- (3.02,2.42);
\draw [line width=1pt] (2.04,2.6)-- (1.54,0.44);
\draw [line width=1pt] (1.44,3.08)-- (1.54,0.44);
\draw [line width=1pt] (-0.46,1.2)-- (1.54,0.44);
\draw [line width=1pt] (-0.86,0)-- (1.54,0.44);
\draw [fill=black] (1.54,0.44) circle (2.5pt);
\draw[color=black] (1.6,-0.2) node {$v$};
\draw[color=black] (2.5,4.2) node {$C_1$};
\draw[color=black] (4.5,0) node {$C_2$};
\draw[color=black] (-2.1,1) node {$C_n$};
\draw [fill=black] (0.2,-1.1) circle (1.5pt);
\draw [fill=black] (1,-1.5) circle (1.5pt);
\draw [fill=black] (1.9,-1.56) circle (1.5pt);
\end{tikzpicture}
\caption{Sketch of a possible situation in the proof of \Cref{prop:mod} with $k_1=3,\ k_2=1,\ k_n=2$}
\label{fig:proof}
\end{figure}

\begin{prop}
Single vertex clusters can be removed in a way that increases modularity. The removal of single vertex clusters implemented in \cref{alg:const} is optimal in the sense of modularity. \label{prop:mod}
\end{prop}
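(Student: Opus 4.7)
The plan is to compute the change in modularity $\Delta Q$ that results from merging a singleton cluster $\{v\}$ into an adjacent community $C$, and to show (a) that this change matches (up to a positive factor independent of $C$) the quantity $2|E|\,k_C - \deg(v)\sum_{w\in C}\deg(w)$ used by the algorithm, and (b) that the maximum of this change over the adjacent communities is strictly positive.

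First I would fix notation: let $\{V_1,\dots,V_N\}$ be the current partition with $V_1=\{v\}$, and for an adjacent community $C=V_c$ let $k_C$ denote the number of $v$–$C$ edges and $S_C=\sum_{w\in C}\deg(w)$. Only the terms of the modularity sum \eqref{eq:modularity} corresponding to $V_1$ and $C$ change when $v$ is moved into $C$. The singleton term contributes $-\deg(v)^2/(4|E|^2)$ before the move and disappears afterwards; the $C$-term sees $|E_C|$ increase by $k_C$ (no loop at $v$) and $S_C$ become $S_C+\deg(v)$. Direct substitution into Definition \ref{def:mod} gives, after the degree-squared cancellation,
\[
\Delta Q \;=\; \frac{k_C}{|E|} \;-\; \frac{2S_C\deg(v)}{4|E|^2} \;=\; \frac{1}{2|E|^2}\bigl(2|E|\,k_C - \deg(v)\,S_C\bigr).
\]
Since $1/(2|E|^2)$ does not depend on $C$, choosing $C$ to maximize $2|E|\,k_C-\deg(v)\,S_C$ is equivalent to choosing $C$ to maximize $\Delta Q$, proving the optimality claim in \cref{alg:const}.

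For the first assertion, I would argue by contradiction: suppose $\Delta Q\leq 0$ for every adjacent community $C$, i.e.\ $2|E|\,k_C \le \deg(v)\,S_C$ for all such $C$. Summing over the adjacent communities and using $\sum_C k_C=\deg(v)$ (each edge at $v$ lies in exactly one adjacent community) together with $\sum_C S_C = 2|E|-\deg(v)$ (the remaining communities partition the vertices other than $v$, and the handshake lemma says all degrees sum to $2|E|$) yields
\[
2|E|\deg(v) \;\le\; \deg(v)\bigl(2|E|-\deg(v)\bigr),
\]
so $\deg(v)^2\le 0$ and hence $\deg(v)=0$. But a degree-$0$ vertex has no adjacent community and is outside the scope of the step, so the assumption fails and some adjacent $C$ yields $\Delta Q>0$.

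The main obstacle is not conceptual but organisational: one has to be careful to identify exactly which terms of the modularity functional change, and to keep the singleton-term cancellation straight (the $\deg(v)^2$ contributions from the vanishing singleton and from the expansion of $S_C$ cancel precisely, which is what makes $\Delta Q$ linear in $k_C$ and $S_C$). Once that cancellation is correctly recorded, both optimality and the existence of a modularity-increasing move follow from the short arithmetic above.
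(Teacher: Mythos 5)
Your proposal is correct and takes essentially the same route as the paper: it computes the modularity increment $\Delta Q$ for merging the singleton into each adjacent community, observes that the algorithm's score $2|E|\,k_C-\deg(v)\,S_C$ is a fixed positive multiple of $\Delta Q$, and sums over the adjacent communities (using $\sum_C k_C=\deg(v)$ and the handshake lemma) to force some $\Delta Q>0$. The one slip is the claim $\sum_C S_C = 2|E|-\deg(v)$: communities not adjacent to $v$ may exist, so this should be the inequality $\sum_C S_C \le 2|E|-\deg(v)$ (as the paper uses), but since your argument only needs the upper bound, nothing breaks.
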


\begin{proof}
Consider a vertex $v$, the clusters $C_1,\dots,C_n$ to which its neighbors belong, and let $k_i$ be the number of neighbors of $v$ that are in $C_i$, see \Cref{fig:proof}. Adding $v$ to $C_i$ increases the modularity of the given partition by
$$
\Delta_i:=\frac{k_i}{|E|}-\frac{2 \deg(v)\cdot \sum_{w \in C_i} \deg(w)}{4\, |E|^2}.
$$
Notice that the value that we maximize in Step 5 is $2|E| \, \Delta_i$ (for the cluster $C_i$ considered in Step 5).
We must show that at least for one $i$, $\Delta_i$ is positive. Observe that
\begin{align*}
4|E|^2 \sum_{i=1}^n \Delta_i&=\sum_{i=1}^n 4|E| k_i - 2 \deg(v) \sum_{i=1}^n \sum_{w \in C_i} \deg(w)\\
&> 4|E| \deg(v) -2 \deg(v) \cdot 2|E| > 0,
\end{align*}
because $\sum_{i=1}^n \sum_{w \in C_i} \deg(w)$ is bounded from above by the sum of all degrees of vertices that are not $v$ which is $2|E|-\deg(v)$. This completes the proof.
\end{proof}

\subsection{Convergence of \cref{alg:const}: Community Detection with Constant Sheaves} \label{sec:convergence}
The following example shows why we need the abortion criterion in Step 2 of \cref{alg:const}:
\begin{example}
Consider the graph shown in \cref{fig:noconvergence} and the constant sheaf $\underline \IR$ on that graph. Let $$\phi(x)=\begin{cases}
    1-x & 0 \le x \le 1 \\
    0  & x \ge 1
\end{cases}.$$ Initialize with values $a_0$ and $b_0$ such that $1+a_0>b_0>a_0$, making sure that none of the connections drawn in \cref{fig:noconvergence} is ignored. Applying \cref{eq:bdcom}, we check that all vertices with value $a(t)$ and $1+a(t)$ as well as all with $b(t)$ and $1+b(t)$ follow the same derivative. Therefore, the opinions in the network will remain in the pattern shown in \cref{fig:noconvergence} and the values $a(t)$ and $b(t)$ evolve according to
\begin{align*}
    \frac{da(t)}{dt}=(b(t)-a(t))(1+a(t)-b(t))=-\frac{db(t)}{dt}.
\end{align*}
That yields
\begin{align*}
    a(t)&=\frac{1}{2}\, \left(a_0+b_0-\frac{1}{e^{2t+c}+1} \right)\\
    b(t)&=\frac{1}{2}\, \left(a_0+b_0+\frac{1}{e^{2t+c}+1} \right)
\end{align*}
with $c=\ln\left(\frac{1}{b_0-a_0}-1\right)$. In particular, $1+a(t)-b(t)=1-\frac{1}{e^{2t+c}+1}<1$, but it converges to one. Thus, the difference over the edge in the middle of \cref{fig:noconvergence}, between $b(t)$ and $1+a(t)$ will always end up in the interval $(0.0033,1)$ and \cref{alg:const} will be aborted in Step 2.
\begin{figure}
    \centering
    \begin{tikzpicture}[x=1cm,y=1cm]
	\clip(-3.5,-1.3) rectangle (4.2,1.3);
    \draw [line width=.5pt] (-.4,0) -- (.2,0);
    \draw [line width=.5pt] (-1.6,0.3) -- (-2.4,0.7);
    \draw [line width=.5pt] (-1.6,-0.3) -- (-2.4,-0.7);
    \draw [line width=.5pt] (1.8,0.3) -- (2.6,0.7);
    \draw [line width=.5pt] (1.8,-0.3) -- (2.6,-0.7);
    \node[draw, text width=0.7cm,align=center] at (-1,0) {$b(t)$};
    \node[draw, text width=1.2cm,align=center] at (1,0) {$1+a(t)$};
    \node[draw, text width=0.7cm,align=center] at (-3,1) {$a(t)$};
    \node[draw, text width=0.7cm,align=center] at (-3,-1) {$a(t)$};
    \node[draw, text width=1.2cm,align=center] at (3.4,1) {$1+b(t)$};
    \node[draw, text width=1.2cm,align=center] at (3.4,-1) {$1+b(t)$};
    \end{tikzpicture}
    \caption{Example of a network in which \cref{alg:const} is aborted in Step 2}
    \label{fig:noconvergence}
\end{figure}
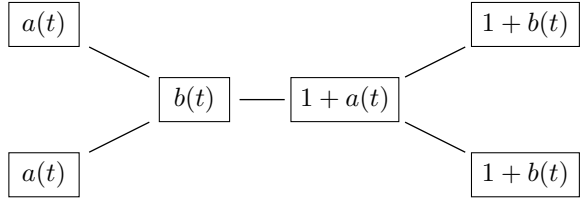
\end{example}
Even though there are configurations without convergence, in our experiments we observed a convergence within 1000 time steps for more than $95\%$ of starting configurations for any choice of parameters.

\subsection{Detecting Communities with a Non-Constant Sheaf}

We now consider a second community detection algorithm that uses the non-constant sheaf defined by setting $\Ff(v)=\IR^{\deg(v)}$ (we think of it as one copy of $\IR$ for every edge that uses $v$), $\Ff(e)=\IR$, and restriction maps being the projections onto the corresponding edges.

At the vertices, there is no interaction between the directions corresponding to different edges, so each direction can be treated separately. If for an edge $e=(v,w)$, the difference between the initial components $x_{v,e}$ and $x_{w,e}$ is less than $D$, edge $e$ will ``survive'' the evolution and otherwise not. Therefore, picking initial values at $v$ uniformly from $[-\frac d 2,\frac d 2]^{\deg(v)}$ amounts to keeping each edge with the same probability $p$ (which depends only on $d$) and ignoring it otherwise. We thus obtain \cref{alg:nonconst}.

\begin{algorithm}
\caption{Community Detection with a Non-Constant Sheaf}\label{alg:nonconst}
\KwInput{probability parameter $p$}
Obtain primary partition by grouping neighbors $w,w'$ together with probability $p$\\
Remove single-vertex clusters as in \cref{alg:const}, Step 4\\
\KwOutput{The obtained partition}
\end{algorithm}

\subsection{Deterministic Sheaf Community Detection}

A more general version of the bounded confidence model allows different functions $\phi_e$ for different edges. In the case of the non-constant sheaf, this means that the probability $p$ can depend on the edge and the local structure of the graph.

With the idea in mind that we want to retain edges if the vertices belong to the same community, it should be more likely for $e=(u,v)$ to be retained if the number of common neighbors of $u$ and $v$, denoted $N_{u,v}$, is not too small compared to the total number of neighbors of $u$ and $v$. In an extreme case, this likelihood is either zero or one, and we arrive at \cref{alg:dete}.

\begin{algorithm}
\caption{Deterministic Community Detection}\label{alg:dete}
\KwInput{two parameters $a \in [0,1]$ and $b \in \IR$}
Obtain primary partition by grouping the two ends $u,v$ of an edge $e$ together if $a \cdot (\deg(u)+\deg(v))< b+N_{u,v}$\\
Remove single-vertex clusters as in \cref{alg:const}, Step 4\\
\KwOutput{The obtained partition}
\end{algorithm}

\subsection{Experimental Setup}

We tested the performance of \cref{alg:const} by considering its dependence on $d$ for $n=1$ and the different bump functions $\phi$ shown in \Cref{fig:phis}, as well as for $n \in \{2,3,5,10\}$ and $\phi_1$. Note that for sake of comparability we fixed $D=1$ to be the threshold for ignoring.
On the interval $[0,1)$, the $\phi_i$ are given by:
\begin{align*}
    \phi_1(x)&=1-x\\
    \phi_2(x)&=1-x^2\\
    \phi_3(x)&=(1-x)^2\\
    \phi_4(x)&=1-x-\sin(2\,\pi\,x)/7
\end{align*}
\cref{alg:nonconst} was run for various values of $p$ and \cref{alg:dete} with different values of $a$ and $b$.

All tests were performed on Zachary's karate club graph $G$, shown in \Cref{fig:karate}, which represents a social network of a karate club studied by sociologist Wayne Zachary from 1970 to 1972 \citep{zachary1977information}. The network captures 34 members of the karate club and includes links between pairs of members who interacted outside the club. The maximal possible modularity of a partition of $G$ for this graph is $Q_\text{max}(G)\approx 0.42$ \cite{maxmod}.
All experiments were implemented in Python and the code is freely and publicly available at \url{https://github.com/ArneWolf/Bd_ confidence_communities}.

\begin{figure}[h]
\centering
    \begin{tikzpicture}
	\begin{axis}[
	height=5cm, width=.6\textwidth,
	only marks, xmin=0, xtick={0,0.2,...,1}, xmax=1.1,legend entries={$\ \phi_1$, $\ \phi_2$, $\ \phi_3$, $\ \phi_4$},
    legend style={at={(1,1)}, anchor=north east}
	]
	\addplot[domain=0:1, color=red, smooth, thick, mark size=0] {1-x};
 	\addplot[domain=0:1, color=blue, smooth, thick, mark size=0] {1-x^2};
	\addplot[domain=0:1, color=olive, smooth, thick, mark size=0] {(1-x)^2};
 	\addplot[domain=0:1, color=forest, smooth, thick, mark size=0] {1-x-sin(2*pi*deg(x))/7};
	\addplot[domain=1:1.1, color=black, smooth, thick, mark size=0] {0};
	\end{axis}
	\end{tikzpicture} 
    \caption{The different functions $\phi$}
    \label{fig:phis}
\end{figure}
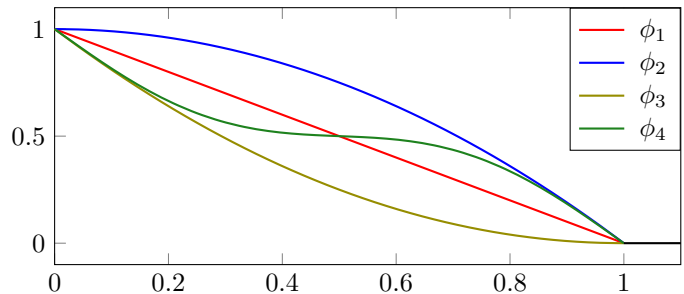

\section{Experimental Results} \label{sec:results}

We report the results of our numerical experiments for our three proposed sheaf-theoretic community detection algorithms from \Cref{sec:alg}. We compare the performance in terms of average number of clusters and modularity (\Cref{def:mod}).\\

\noindent
\textbf{\cref{alg:const}: The Constant Sheaves.} To justify our stopping criterion of 0.0033 in Step 2 of \cref{alg:const} and estimate the error caused, we compared the partitions to those obtained for running the evolution longer so that for every edge $\|x_u - x_v\| \not \in (0.001,1)$ instead. The more precise calculations took, on average, a factor of 100 longer. Out of 2930 runs we performed, the first stopping criterion applied 175 times and for the remaining 2755 runs, the two obtained partitions agreed. We thus concluded that 0.0033 is small enough to not cause a significant error.

Each parameter combination was simulated $N=1000$ times. \Cref{fig:constphis} shows the average values of $X$ (number of clusters, modularity) and an error bar with radius
$$
    X_{\text{error}}=\sigma_X+\frac{\# \, \text{early stoppings}}{N} \, \bar X,
$$
where $\bar X$ denotes the average and $\sigma$ the standard deviation. 

We immediately see that a larger width $d$ of the distribution from which the initial opinions are picked results in more clusters. Neither the number of clusters nor the modularity depends significantly on the choice of $\phi$.

\begin{figure}[h]
\centering
    \begin{tikzpicture}
	\begin{axis}[
	height=6cm, width=6.7cm,
	xlabel={$d$}, ylabel={}, ylabel style={at={(0,0.5)}},xmax=6.2,
	only marks,xlabel style={at={(0.5,0)}}
	]
	\addplot+ [color=red, mark=x, error bars/.cd, x dir=both, x explicit] table [x=p, y=pmax, col sep=comma] {pmaxplot.csv};
	\end{axis}
	\end{tikzpicture}
    \caption{Likelihood of most likely partition for different diameters $d$ of the initial distribution}
    \label{fig:pmax}
\end{figure}
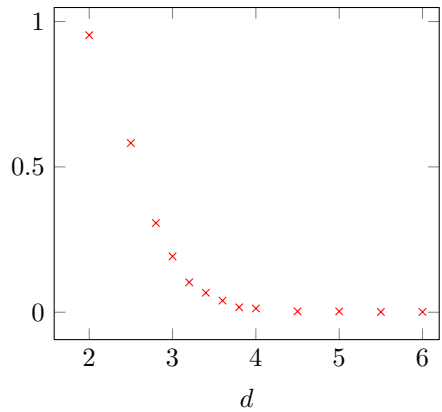

In the case of smaller $d$, however, $\phi_3$ tends to create more clusters than the others. Intuitively, this behavior can be explained by comparing $\phi_3$ to linearly decaying bump functions: it is more similar to a bump function that decays quicker than $\phi_1$, than to $\phi_1$. Quicker decay means a smaller threshold $D<1$. That in turn corresponds to having $D=1$ and a bigger $d$. Thus, the properties of $\phi_3$ tend to make $d$ appear larger. 

For large $d$, the influence of removing singleton clusters (compared to the primary partition) grows. This explains why the differences in the number of clusters as well as modularity for different choices of $\phi$ and $d$ become smaller. Looking at $p=0$ in \Cref{fig:nonconst} shows that resolving singleton clusters from a collection of only singleton clusters yields a modularity of about 0.191. This is one explanation for having higher modularites for larger $d$. Another explanation is that small $d$ are likely to result in one big cluster containing all the points. This trivial partition has modularity 0. \Cref{fig:pmax} shows the likelihood of the most likely partition for $\phi_1$ and $n=1$. For $d\le 4$, whenever there is a partition that is significantly the most likely one, this is the trivial partition. The same qualitative behavior is observed for other $\phi$ and $n$.

If $d$ is large enough, the number of clusters increases as $n$ increases. This is significant for $n=10$, but the tendency can be observed in \Cref{fig:constns} for smaller $n$ as well. This can be explained by larger $n$ corresponding to more topics of discussion, which in turn increases the chances that persons can disagree. The modularity for small $d$ seems to depend on $n$ non-monotonously: $n=10$ gives the highest values, followed by $n=1$ and $n=2$. If $d$ is large, the differences become insignificant.\\

\noindent
\textbf{\cref{alg:nonconst}: The Non-Constant Sheaf.} For \cref{alg:nonconst}, \Cref{fig:nonconst} shows the expected decrease of the number of clusters with increasing $p$. The modularity has a maximum of about 0.26 near $p=0.12$. For smaller $p \rightarrow 0$, the modularity becomes 0.191 and for large $p$ it goes to zero due to the dominance of the trivial partition.\\

\noindent
\textbf{\cref{alg:dete}: The Deterministic Algorithm.} Due to the affine linear condition in Step 1 of \cref{alg:dete}, the modularity and number of clusters change along affine lines. The deterministic algorithm reaches a maximal modularity of about $0.407$, which comes close to $Q_\text{max}=0.42$. However, the portion of the parameter space that reaches this value is small---the other algorithms had broader ranges of maxima for their modularities. The partition yielding $Q=0.407$ consists of four clusters, just as the partition obtaining $Q_\text{max}$.

\section{Discussion}
\label{sec:end}

In this work, we showed that sheaves are a viable algebraic topological tool to model the problem of community detection on networks.  We proposed three different algorithms, two of which had random initializations and were based on constant sheaves and a non-constant sheaf, and a third deterministic version that allows for different bump functions for different edges of the graph.  The deterministic sheaf-theoretic community detection algorithm, in particular, performed well in terms of modularity and achieved values near the maximal modularity value.  Ours is the first work to computationally implement sheaves on real-world social network data and the first use of sheaves in the problem of community detection on networks.  Moreover, our work provides a proof-of-concept for future work adapting the potential of cellular sheaves to studying complex systems captured by networks and simplicial complexes, in general.

Directions for future work involve combining various notions from the different algorithms to, for instance, allow bump functions in \cref{alg:const} or the probability $p$ in \cref{alg:nonconst} to depend on the edge, which may potentially improve performance.  An optimization procedure may also be proposed to find optimal parameters, and thus, optimal partitions for a network.

\begin{figure}
\begin{center}
\begin{tikzpicture}
     \begin{groupplot}[
         group style={
             group name=my plots,
             group size=1 by 2,
             xlabels at=edge bottom,
             xticklabels at=edge bottom,
             vertical sep=0pt
         },
         height=40cm, width=\textwidth
         ]
 	\nextgroupplot[
 	height=5cm, ylabel={average number of clusters},
 	only marks, ymin=0,legend entries={ $\ \phi_1$, $\ \phi_2$, $\ \phi_3$,$\ \phi_4$},
     legend style={at={(0.15,0.9)}, anchor=north east}
 	]
 	\addplot+ [color=red, mark=o, error bars/.cd, y dir=both, y explicit] table [x=r, y=num, y error=numerr,  col sep=comma] {phi1plot.csv};
  	\addplot+ [color=blue, mark=o, error bars/.cd, y dir=both, y explicit] table [x=r, y=num, y error=numerr, col sep=comma] {phi2plot.csv};
   	\addplot+ [color=olive, mark=o, error bars/.cd, y dir=both, y explicit] table [x=r, y=num, y error=numerr, col sep=comma] {phi3plot.csv};
    \addplot+ [color=forest, mark=o, error bars/.cd, y dir=both, y explicit] table [x=r, y=num, y error=numerr, col sep=comma] {phi4plot.csv};
    \nextgroupplot[height=6cm, ylabel={average modularity},xlabel={width $d$ of distribution},xlabel style={at={(0.5,0.03)}},ytick={-0.05,0,...,0.3},tick label style={/pgf/number format/fixed},
 	only marks]
    \addplot+ [color=red, mark=o, error bars/.cd, y dir=both, y explicit] table [x=r, y=qav, y error=qaverr,  col sep=comma] {phi1plot.csv};
  	\addplot+ [color=blue, mark=o, error bars/.cd, y dir=both, y explicit] table [x=r, y=qav, y error=qaverr, col sep=comma] {phi2plot.csv};
   	\addplot+ [color=olive, mark=o, error bars/.cd, y dir=both, y explicit] table [x=r, y=qav, y error=qaverr, col sep=comma] {phi3plot.csv};
    \addplot+ [color=forest, mark=o, error bars/.cd, y dir=both, y explicit] table [x=r, y=qav, y error=qaverr, col sep=comma] {phi4plot.csv};
\end{groupplot}
\end{tikzpicture} \vspace{-.7cm}
     \caption{Average number of clusters and modularity for $n=1$, different bump functions $\phi$ and different diameters of the initial opinion distributions}
     \label{fig:constphis}
     \end{center}\vspace{-.4cm}
\end{figure}
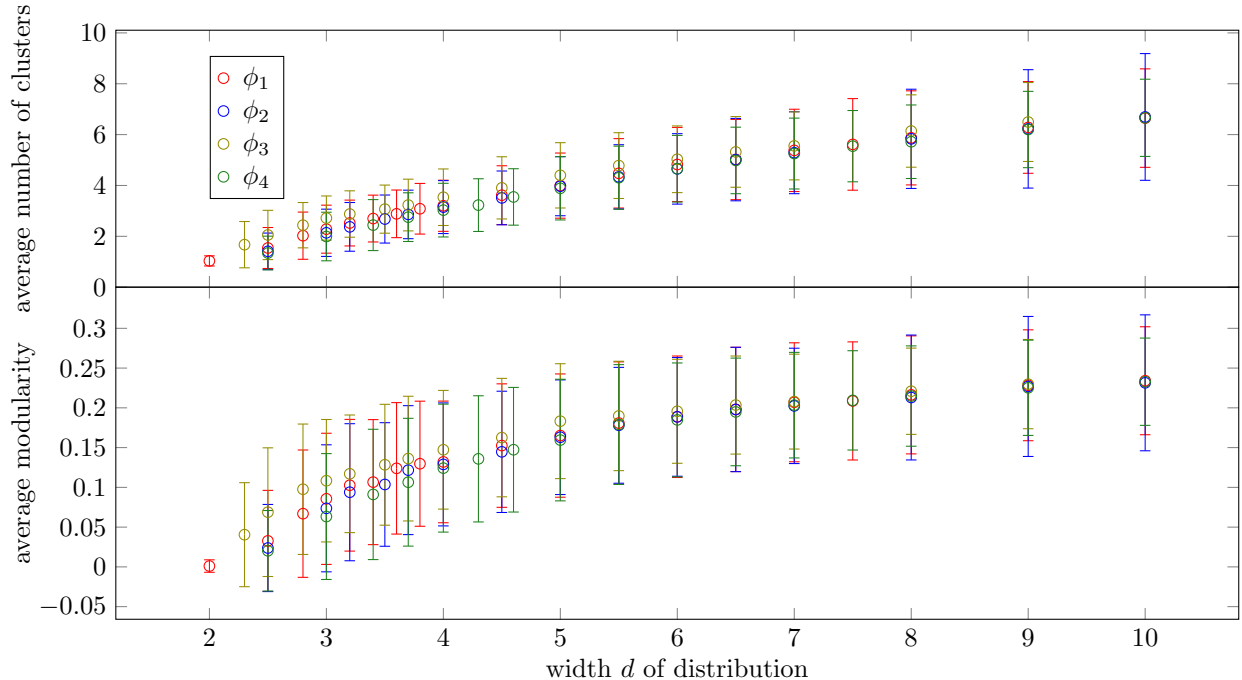
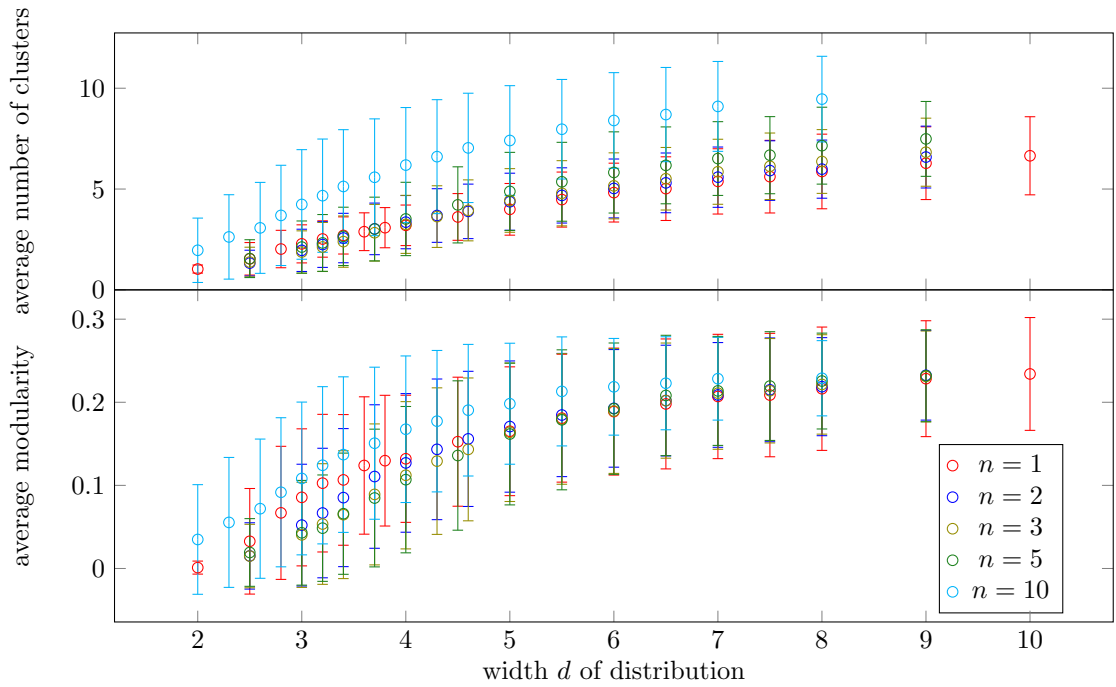
\begin{figure}
    \begin{center}
    \begin{tikzpicture}
    \begin{groupplot}[
        group style={
            group name=my plots,
            group size=1 by 2,
            xlabels at=edge bottom,
            xticklabels at=edge bottom,
            vertical sep=0pt
        },
        height=30cm, width=0.9\textwidth
        ]
	\nextgroupplot[
	height=5cm, ylabel={average number of clusters}, 
	only marks, ymin=0,legend entries={ $\ n=1$, $\ n=2$, $\ n=3$,$\ n=5$, $\ n=10$},
    legend style={at={(0.95,-0.6)}, anchor=north east}
	]
	\addplot+ [color=red, mark=o, error bars/.cd, y dir=both, y explicit] table [x=r, y=num, y error=numerr,  col sep=comma] {phi1plot.csv};
 	\addplot+ [color=blue, mark=o, error bars/.cd, y dir=both, y explicit] table [x=r, y=num, y error=numerr, col sep=comma] {const2plot.csv};
  	\addplot+ [color=olive, mark=o, error bars/.cd, y dir=both, y explicit] table [x=r, y=num, y error=numerr, col sep=comma] {const3plot.csv};
     \addplot+ [color=forest, mark=o, error bars/.cd, y dir=both, y explicit] table [x=r, y=num, y error=numerr, col sep=comma] {const5plot.csv};
    \addplot+ [color=skyblue, mark=o, error bars/.cd, y dir=both, y explicit] table [x=r, y=num, y error=numerr, col sep=comma] {const10plot.csv};
    \nextgroupplot[height=6cm, ylabel={average modularity},xlabel={width $d$ of distribution},xlabel style={at={(0.5,0.03)}},tick label style={/pgf/number format/fixed},
	only marks]
    \addplot+ [color=red, mark=o, error bars/.cd, y dir=both, y explicit] table [x=r, y=qav, y error=qaverr,  col sep=comma] {phi1plot.csv};
 	\addplot+ [color=blue, mark=o, error bars/.cd, y dir=both, y explicit] table [x=r, y=qav, y error=qaverr, col sep=comma] {const2plot.csv};
  	\addplot+ [color=olive, mark=o, error bars/.cd, y dir=both, y explicit] table [x=r, y=qav, y error=qaverr, col sep=comma] {const3plot.csv};
     \addplot+ [color=forest, mark=o, error bars/.cd, y dir=both, y explicit] table [x=r, y=qav, y error=qaverr, col sep=comma] {const5plot.csv};
    \addplot+ [color=skyblue, mark=o, error bars/.cd, y dir=both, y explicit] table [x=r, y=qav, y error=qaverr, col sep=comma] {const10plot.csv};
\end{groupplot}
	\end{tikzpicture} \vspace{-.3cm}
    \caption{Average number of clusters and modularity for different dimensions $n$, bump function $\phi_1$ and different diameters of the initial opinion distributions} 
    \label{fig:constns}
    \end{center}\vspace{-1cm}
\end{figure}

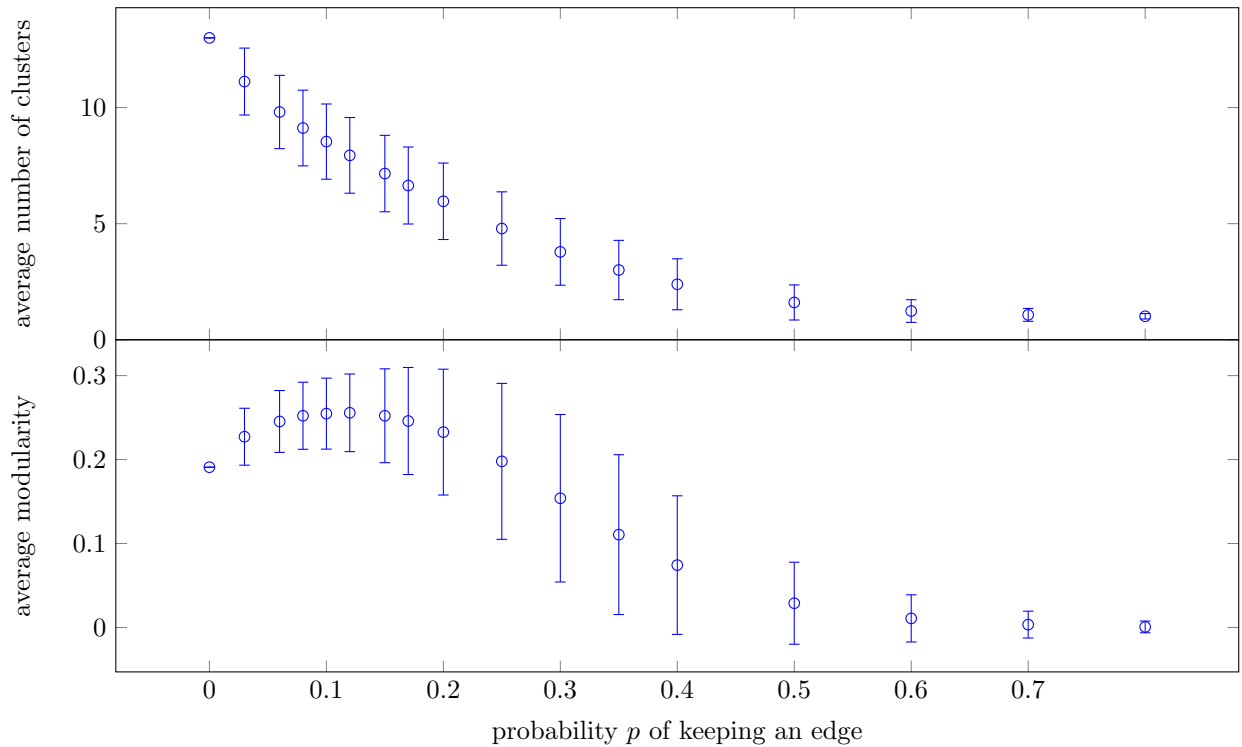
\begin{figure}
    \begin{center}
    \begin{tikzpicture}
    \begin{groupplot}[
        group style={
            group name=my plots,
            group size=1 by 2,
            xlabels at=edge bottom,
            xticklabels at=edge bottom,
            vertical sep=0pt
        },
        height=28cm, width=\textwidth
        ]
	\nextgroupplot[
	height=6cm, ylabel={average number of clusters},
	only marks, ymin=0
	]
	\addplot+ [color=blue, mark=o, error bars/.cd, y dir=both, y explicit] table [x=p, y=num, y error=numerr,  col sep=semicolon] {nonconstplot.csv};
    \nextgroupplot[height=6cm, ylabel={average modularity}, xlabel={probability $p$ of keeping an edge},xlabel style={at={(0.5,0)}},tick label style={/pgf/number format/fixed},xtick={0,0.1,...,0.7},
	only marks]
 	\addplot+ [color=blue, mark=o, error bars/.cd, y dir=both, y explicit] table [x=p, y=qav, y error=qaverr, col sep=semicolon] {nonconstplot.csv};
\end{groupplot}
\end{tikzpicture}
    \caption{Average number of clusters and modularity for the non-constant sheaf algorithm for different probabilities $p$ of keeping an edge}
    \label{fig:nonconst}
    \end{center}
\end{figure}

\begin{figure}
    \begin{center}
	\includegraphics{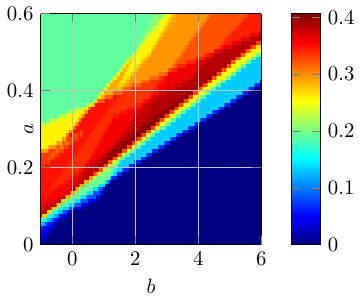} \hspace{1cm}
    \includegraphics{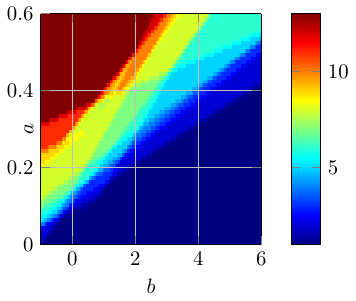}
    \caption{Modularities (left) and numbers of clusters (right) for several parameter combinations for the deterministic \cref{alg:dete}}
    \label{fig:detemod}
    \end{center}
\end{figure}

\section*{Acknowledgments}

A.W.~is funded by a London School of Geometry and Number Theory--Imperial College London PhD studentship, which is supported by the Engineering and Physical Sciences Research Council [EP/S021590/1].

\bibliographystyle{authordate3}
\bibliography{sheafCD_ref}

\end{document}